\newtheorem{theorem}{Theorem}
\newtheorem{definition}[theorem]{Definition}
\newtheorem*{theorem*}{Theorem}
\newtheorem*{remark*}{Remark}
 \gdef\xxxmark{%
   \expandafter\ifx\csname @mpargs\endcsname\relax 
     \expandafter\ifx\csname @captype\endcsname\relax 
       \marginpar{xxx}
     \else
       xxx 
     \fi
   \else
     xxx 
   \fi}
 \gdef\xxx{\@ifnextchar[\xxx@lab\xxx@nolab}
 \long\gdef\xxx@lab[#1]#2{\textbf{[\xxxmark #2 ---{\sc #1}]}}
 \long\gdef\xxx@nolab#1{\textbf{[\xxxmark #1]}}
\newtheorem{thm}{Theorem}
\begin{document}

\title{Improved Connectivity Condition for Byzantine Fault Tolerance}

\author{Adam Hesterberg \thanks{MIT Math Department, 77 Massachusetts Avenue, Cambridge, MA, 02139, USA. Email: \texttt{achester@math.mit.edu}. Phone: 1(609)616-2849}, Andrea Lincoln\thanks{MIT Computer Science and Artificial Intelligence Laboratory, 32 Vassar Street, Cambridge, MA 02139,
USA. Email:~\texttt{\{andreali,jaysonl\}@.mit.edu}. Phone: 1(650)465-0538}, \& Jayson Lynch \footnotemark[4]}

\date{}

\maketitle

\setcounter{page}{0}

\begin{abstract}
Given a network in which some pairs of nodes can communicate freely, and some subsets of the nodes could be faulty and colluding to disrupt communication, when can messages reliably be sent from one given node to another? We give a new characterization of when the agreement problem can be solved and provide an agreement algorithm which can reach agreement when the number of Byzantine nodes along each minimal vertex cut is bounded. Our new bound holds for a strict superset of cases than the  previously known bound. We show that the new bound is tight. 
Furthermore, we show that this algorithm does not require the processes to know the graph structure, as the previously known algorithm did. Finally, we explore some of the situations in which we can reach agreement if we assume that individual nodes or entire subgraphs are trustworthy.
\end{abstract}

\clearpage

\section{Introduction}

In this paper we investigate the Agreement problem in synchronous systems with Byzantine nodes: given a network in which some pairs of nodes can communicate freely, with some subsets of the nodes possibly faulty, when can messages reliably be sent from one node to another? We'll give characterizations of when this problem can be solved. First, we give an alternate agreement algorithm which can reach agreement when the number of Byzantine nodes in each minimal vertex cut is restricted. This is a strictly weaker requirement than was previously known, and we show that it's tight in that if there are any more Byzantine nodes than it allows, then no solution exists. Next, we show that this algorithm does not require the processes to know the graph structure, as the previously known algorithm did. Finally, we explore some of the situations in which we can reach agreement if we assume that individual nodes or entire subgraphs are trustworthy.

In the real world, there are often reasonable assumptions that can be made about the distribution of faulty nodes in a system. By giving a broader characterization of when agreement with Byzantine faults can be solved, we enable more potential application. For example, one might have a very large grid network which is only 4-connected. However, if the designers do not expect that Byzantine nodes can be adversarially placed, then they may be confident that the Byzantine faults will likely be spaced out and thus they can tolerate far more than four faults in the entire system, where the previously known algorithm could only tolerate one. If one is interested in faults occurring at random nodes rather than in a purely Byzantine manner we may have a far more robust system. Randomly distributed Byzantine faults have been studied by Blough and Pelc \cite{RandomDist}.  Another work, by Reischuk, looks at a similarly styled restriction where only a certain fraction of the nodes participating in a phase behave in a faulty manner \cite{Reischuk198523}. In terms of trusted nodes, it is perfectly reasonable for an organization to have part of its network maintained at a higher level of security, say with hardware-verification, to the extent that it is either unreasonable to believe that part of the network has failed or that processor being compromised makes the rest of the work a moot point. Also, peer-to-peer networks such as Tor are intended to be robust against some of the ``peers'' being compromised, and it is more likely that, say, all the US nodes or all the Chinese ones would be compromised simultaneously than ones from a mix of countries. Thus there are practical concerns that can be addressed by these extensions.

This work primarily builds off of the initial work (in \cite{LSP80} and \cite{LSP82}) by Lamport, Shostak, and Pease giving algorithms for solving Agreement with Byzantine Faults as well as necessary and sufficient conditions for solving the problem in complete graphs. We also note Dolev who generalized this to non-complete graphs in \cite{Dolev82} and the work of Dolev, Strong, Castro, and Liskov who haver given additional constraints to the problem or graph structure pushing allowing for more potential practical application. \cite{DS83} \cite{CL99}

There are three major sections to this paper. First, we give motivating examples where distributed systems with certain properties appear to be able to reach agreement, even though they do not fulfill the general criteria given by LSP. Next, we prove our central theorem, giving an exact condition when agreement can be reached in the presence of Byzantine faults. The third section goes on to consider some of the implications of this theorem, and also extends the work by applying the results to cases where processors do not have less knowledge of the graph than would normally be needed to solve agreement and gaining stronger results in the presence of trusted nodes.

\section{Motivating Examples}
Byzantine Agreement has tight bounds on the number of nodes required ($n > 3f$) and the connectivity required ($c > 2f$) for agreement. So, in order to motivate our results we need to show examples where the original Byzantine Agreement conditions are too stringent. 

In all these cases we assume that nodes have UIDs and that when a node receives a message it can determine from which neighbor it received the message. However, there are no signatures. Thus, if A sends to C through B: C can't determine if A really sent the original message, but can determine that B sent the most recent one.  

When we are analyzing the case where we limit the distribution of nodes: we can sometimes perform better than $n>3f$ and $c>2f$. After all, with limitations on the placement of Byzantine nodes we limit the cases of failure we need to consider. At the most extreme, we could limit all f faults to f nodes, allowing the nodes that are assumed non-faulty to communicate directly and ignore the potentially faulty nodes. 

When analyzing the case where we have an unknown graph we want to show that the existing algorithm does not solve the problem even when $c>2f$ and $n>3f$. This will motivate our conclusion that with an altered algorithm for node to node communication, we can solve this problem in all cases where $c>2f$ and $n>3f$. 


\subsection{Majority Non-faulty Nodes Per Cut}
Consider the case of $K_{4,3}$. This graph is 3-connected, so previous results about Byzantine Agreement tell us that we can't solve agreement for two faults. 
However, if we require that every minimal vertex cut set has a majority of non-faulty nodes, we can solve agreement for two faults. 

The four nodes forming one of the partite sets form a minimal vertex cut set and thus can have at most one fault. The three nodes forming the other partite set form a minimal vertex cut set and thus can have at most one fault. Thus, if we have two faults one must be in each partite set. This gives us a lot of power. The non-faulty nodes in opposite partite sets can communicate directly. The non-faulty nodes in the same partite set can communicate by taking the majority vote of the forwarded message from their direct neighbors: since there is at most one fault out of 3 or 4, non-faulty nodes will be the majority. Finally, the original agreement algorithm for complete graphs finishes the problem, because $7 \geq 3*2+1 = 7$. 

\begin{tikzpicture}
\draw[fill=black] (0,0) circle(0.3);
\draw[fill=black] (0,1) circle(0.3);
\draw[fill=black] (0,2) circle(0.3);
\draw[fill=black] (1,0) circle(0.3);
\draw[fill=black] (1,1) circle(0.3);
\draw[fill=black] (1,2) circle(0.3);
\draw[fill=black] (1,3) circle(0.3);
\draw (0,0) -- (1,0);
\draw (0,0) -- (1,1);
\draw (0,0) -- (1,2);
\draw (0,0) -- (1,3);
\draw (0,1) -- (1,0);
\draw (0,1) -- (1,1);
\draw (0,1) -- (1,2);
\draw (0,1) -- (1,3);
\draw (0,2) -- (1,0);
\draw (0,2) -- (1,1);
\draw (0,2) -- (1,2);
\draw (0,2) -- (1,3);

\end{tikzpicture}

This should motivate us to consider the condition of each cut being majority non-faulty nodes as providing interesting results.  In the results section it will be argued that with an altered node-to-node communication algorithm the condition of majority non-faulty nodes per vertex cut set and $n>3f$ together are sufficient to reach agreement. 

\subsection{Unknown Network Graph}
In the problem of the unknown network graph each node knows:
\begin{enumerate}
 \item the size of the network $n$,
\item the UIDs of all the nodes in the graph, and
\item the list of its own local neighbors.
\end{enumerate}

In the original problem what to do in the case of an unknown network graph is not even well defined. We select $2f+1$ disjoint paths; however, how to select $2f+1$ disjoint paths when the network is unknown is unclear. In some cases a Byzantine node can lie about network paths so that, naively, it looks like there are not $2f+1$ disjoint paths.

\begin{tikzpicture}
\draw[fill=black] (0,0) circle(0.3);
\draw[fill=black] (-1,1) circle(0.3);
\draw[fill=black] (0,1) circle(0.3);
\draw[fill=blue] (0,1) circle(0.3);
\draw[fill=black] (-1,2) circle(0.3);
\draw[fill=red] (0,2) circle(0.3);
\draw[fill=green] (1,2) circle(0.3);
\draw[fill=black] (0,3) circle(0.3);
\draw (0,3) -- (1,2);
\draw (0,3) -- (0,2);
\draw (0,3) -- (-1,2);
\draw (0,2) -- (0,1);
\draw (1,2) -- (1,1);
\draw (0,2) -- (1,2);
\draw (-1,2) -- (-1,1);
\draw (0,2) -- (1,1);
\draw (-1,2) -- (0,1);
\draw (-1,1) -- (0,1);
\draw (0,0) -- (0,1);
\draw (0,0) -- (-1,1);
\draw (0,0) -- (1,1);

\end{tikzpicture}

In this case, if the Byzantine node (red) claims not to be adjacent to the blue node, then it appears to the top node as if there are only two disjoint paths from A to B. With the original algorithm the selection process in this case for disjoint paths is undefined; if the path through the red node is chosen as one of the two, then the algorithm fails.

In results it will be argued that the altered node-to-node communication algorithm solves the issue of an unknown network graph. 

\subsection{Non-Faulty by Assumption}
To motivate the possible application of nodes being non-faulty by assumption: there may be some nodes that have a very very very low probability of failure, and can be trusted. Alternatively, for some network setups we can succeed only if certain nodes are good. For example, the central node of a star graph. 

Let us define a node that is “non-faulty by assumption” to be a node that can never be a Byzantine node, and is known to the whole graph to be a non-Byzantine node. 

To motivate the theory behind non-faulty by assumption nodes: consider the star graph $S_{n-1}$. This graph is one connected. However, if the central node is assumed non-faulty then agreement can be solved with up to $n-1$ faults by the central non-faulty node sending its value to all other nodes. This value is the one agreed to. Thus, we can far out-perform both the connectivity bound $c= 1 <2f+1$ and the number of nodes bound $n  <  3*(k-1)+1$.  With assumed non-faulty nodes we can perform better than before.

In results the effects of non-faulty nodes and cliques will be analyzed.

\section{Results}

We are specifying a new weaker cut condition, let us call it the \emph{weak cut property}. The goal of this definition is to capture what condition is needed on cuts if we have more information about the distribution of faulty nodes. 

\begin{definition}
The \emph{weak cut property} is the property that for any two sets $A$ and $B$ of vertices such that $A \cup B$ is a minimal vertex cut either 

\begin{itemize}
\item There is no valid distribution of faults such that the vertices in A can all simultaneously be faulty.
\item There is no valid distribution of faults such that the vertices in B can all simultaneously be faulty.
\end{itemize}
\end{definition}

To elucidate this definition we will show that the condition that the majority of every cut is non-faulty nodes implies the weak cut property. Let us partition a cut of size $c$ into two sets, $A$ and $B$. For any such partition, one of $A$ and $B$ will be of size $\geq c/2$. In a cut of size $c$ there are $< c/2$ faulty nodes. Thus, the largest set must have at least one non-faulty node in any valid distribution of faults. Thus, the condition that the majority of every cut is non-faulty implies the weak cut property. 

Note that the weak cut property covers strictly more cases than the previous condition ( if the smallest cut is of size $c$ then the number of faults must be less than $c/2$). Consider some cut, $C'$, in this graph. Let $|C'| = c'$, we know $c' \geq c$. When we partition this cut into $A$ and $B$ then one of them is of size $\geq c'/2$. Then the size of the larger set must be $\geq c'/2 \geq c/2 > f$. Thus, for any valid distribution of faults the larger half of a partition must have at least one non-faulty node. So, any graph that has the standard cut condition from previous Byzantine Fault Tolerance work also satisfies the weak cut property.

The overall structure of our proof is to show first that the weak cut property is necessary and second that it is sufficient. 
We will show that if the weak cut property does not hold, then any protocol will have a valid placement of Byzantine nodes that can cause non-agreement. 
Next we give a communication protocol that allows any non-faulty node $v$ can send a message to any non-faulty node $w$. 
Then, one can run the standard Byzantine agreement protocol that would be run on a fully connected graph, simply replacing every node to node communication protocol with the protocol we describe. 
If both $3f<n$ and every the weak cut property holds then the non-faulty nodes can agree. If either condition is not met then agreement is impossible. 


The intuition for the communication results is that minimum vertex-cut sets (shortened to cut from now on) are the nodes over which nodes on one side of the cut can communicate with those on the other side of the cut. Thus, if faults can be placed too freely no communication can occur. If faults are sufficiently limited in placement then communication is possible. 

\subsection{Agreement is impossible if the weak cut property is violated}

This first theorem is going to be the case where too much freedom is given to fault placement. $A$ and $B$ are two sets where $A \cup B$ forms a cut. If there exists a valid placement of faults such that all of $A$ can be faulty simultaneously. And, there exists a (possibly different) valid placement of faults such that all of $B$ can be faulty. Then, it is impossible to send messages across that cut. Thus, agreement is impossible.  

Intuitively, it is impossible to determine which of set $A$ and set $B$, which for a cut in the graph, to listen to if either could be entirely Byzantine. We prove this by a simple simulation argument.

\begin{thm}
If there exists a minimal vertex cut $C$ and two sets $A$ and $B$ of vertices such that $A \cup B = C$ and both 
\begin{itemize}
\item there is a valid distribution of faults such that the vertices in A can all simultaneously be faulty,
\item there is a valid distribution of faults such that the vertices in B can all simultaneously be faulty,
\end{itemize}
then  agreement is impossible.
\end{thm}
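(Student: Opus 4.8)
The plan is to assume toward a contradiction that some protocol $P$ reaches agreement on this graph under the stated fault model, and then exhibit two executions that are indistinguishable to an honest receiver yet are required to produce different decisions. First I would fix the geometry: since $C = A \cup B$ is a minimal vertex cut, the graph $G - C$ splits into at least two components, so I can choose an honest sender $s$ and honest receiver $r$ lying in different components, and every path from $s$ to $r$ must pass through $C$. Because $s, r \notin C$ and the adversary will only corrupt nodes inside $C$, both $s$ and $r$ are non-faulty in both executions I build, so the validity requirement forces $r$ to output the value that $s$ holds.

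Next I would describe the two worlds. In World $0$ the sender's value is $0$ and the adversary uses a valid distribution that makes all of $A$ faulty (which exists by hypothesis); all other nodes, including $s$, $r$, and the cut nodes in $C \setminus A$, are honest. In World $1$ the sender's value is $1$ and the adversary uses a valid distribution making all of $B$ faulty, so now the honest cut nodes are those in $C \setminus B$. The Byzantine nodes are programmed to carry the other world's story across the cut: I designate the nodes of $A$ as the $1$-carriers and the nodes of $B \setminus A$ as the $0$-carriers. In World $0$ the $0$-carriers are genuinely honest and emit the true value-$0$ behavior, while the faulty $A$-nodes internally simulate the value-$1$ sender side and emit exactly the messages an honest $A$ would emit in World $1$; World $1$ is the mirror image. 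The point is that, for each individual cut node, the stream of messages it sends into $r$'s component is the same function of the round number in both worlds (genuine in one world and simulated in the other), so $r$'s incoming transcript is identical across the two worlds.

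To make the simulation well-defined I would set up a round-by-round induction, exploiting the fact that the model is synchronous. The claim is that for every round $k$ the collective state of the receiver-side component, and the messages crossing $C$ into it, are identical in Worlds $0$ and $1$. The base case is immediate because the receiver side holds no part of the sender's value initially. For the inductive step, the receiver side, having identical states, sends identical messages back into $C$; these are precisely the external inputs the faulty simulators need, together with their internally generated picture of the opposite sender side, to compute the next round of cut-crossing messages matching the genuine other world. Hence indistinguishability is preserved, $r$ decides the same value in both worlds, and this contradicts validity, which demands $0$ in World $0$ and $1$ in World $1$.

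The main obstacle, and the step I would be most careful about, is exactly this mutual dependence: the faulty nodes' behavior depends on what the receiver side does, while the receiver side's behavior depends on the faulty nodes' messages. Resolving the apparent circularity is what the synchronous round structure and the induction above are for, and I would also have to check that each Byzantine node can actually realize its prescribed behavior, i.e. that it has enough local information to run the internal simulation of the opposite world's sender side. It does, since Byzantine nodes may perform arbitrary computation and the only external inputs they require, namely the messages arriving from $r$'s side, are supplied identically in both worlds by the induction hypothesis. I would also note the minor bookkeeping that $A$ and $B$ need not be disjoint: any node in $A \cap B$ is faulty in both worlds and can simply be assigned the $1$-carrier behavior in both, which keeps its cut-crossing messages identical and does not disturb the argument.
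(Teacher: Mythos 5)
There is a genuine gap: your argument proves the impossibility of reliable point-to-point transmission across the cut, but the theorem is about \emph{agreement} (consensus), where every node holds an input and validity only constrains the decision when all non-faulty nodes start with the same value. Your key sentence --- ``the validity requirement forces $r$ to output the value that $s$ holds'' --- is not the consensus validity condition. To make validity force decision $0$ in World $0$ and decision $1$ in World $1$, \emph{all} non-faulty nodes, including those in $r$'s component, must start with $0$ in World $0$ and with $1$ in World $1$. But then the base case of your round-by-round induction fails: the receiver-side component's initial state is \emph{not} identical in the two worlds (indeed $r$ can trivially distinguish them by looking at its own input), so the two-world indistinguishability collapses. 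Conversely, if you give the receiver side the same inputs in both worlds to save the induction, then in at least one world the non-faulty nodes do not all share an input, validity imposes no constraint there, and no contradiction follows. You cannot escape this by reducing agreement to transmission either, since that reduction is not supplied and is not immediate on incomplete graphs.

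The paper's proof repairs exactly this by using a \emph{three}-execution hybrid chain rather than two worlds: $\alpha_{0,0}$ (both sides start with $0$, $A$ faulty and simulating a value-$1$ execution), $\beta_{0,1}$ (sender side $0$, receiver side $1$, $B$ faulty and simulating), and $\alpha_{1,1}$ (both sides $1$, $A$ faulty and simulating). Validity pins the decision to $0$ in $\alpha_{0,0}$; indistinguishability to the $U$-side carries that decision into $\beta_{0,1}$, and the \emph{agreement} property (not validity) then transfers it to the $V$-side there; a second indistinguishability step, this time to the $V$-side, carries the decision $0$ into $\alpha_{1,1}$, where it violates validity. Your simulation mechanics --- the synchronous round-by-round induction showing the Byzantine cut nodes can consistently impersonate the opposite world, and the handling of $A \cap B$ --- are sound and are exactly what is needed to justify each single indistinguishability link in that chain; what is missing is the chain itself.
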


\begin{proof} Since $A \cup B$ is a cut, there are vertices $u$ and $v$ that are connected only through $A \cup B$. Let them both be non-faulty, and let their components of $G \setminus (A \cup B$ be $U$ and $V$, respectively. Consider initial setups $\alpha_{i,j}$ and $\beta_{i,j}$ in which all of $A$ and all of $B$, respectively, are faulty, all the vertices in $U$ start with the value $i$, and all the vertices in $V$ start with the value $j$. In $\alpha_{0,0}$, let the vertices in A simulate an execution in which every non-faulty vertex started with 1. All the nonfaulty vertices started with 0, so they must come to agreement on 0 anyway. But consider an execution starting from $\beta_{0,1}$ in which all the vertices in $B$ simulate their equivalents from the previous execution. From the perspective of the vertices in $U$, this is the same execution, so they must come to agreement on $0$, so the vertices in $V$ must also come to agreement on 0. Finally, consider an execution starting from $\alpha_{1,1}$ in which all the vertices in $A$ simulate their equivalents from the previous execution. From the perspective of the vertices in $V$, this is the same execution, so they must come to agreement on $0$, which is not any vertex’s starting value, violating validity.
\end{proof}

Bellow we state an equivalent theorem statement, which may be more intuitive, but less descriptive. 

\begin{thm}
If there is a cut that violates the weak cut property then agreement is impossible. 
\end{thm}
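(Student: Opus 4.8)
The plan is to recognize that this statement is nothing more than a repackaging of the preceding theorem, so the entire proof reduces to unwinding the definition of the weak cut property and matching it against that theorem's hypothesis. First I would write out explicitly what it means for a cut to \emph{violate} the weak cut property. As defined, the property asserts that for every pair of sets $A$ and $B$ with $A \cup B$ a minimal vertex cut, \emph{at least one} of the two sets cannot be made entirely faulty by any valid distribution of faults. Negating this---applying De Morgan to the disjunction in the definition---yields exactly: there exist sets $A$ and $B$ with $A \cup B$ a minimal vertex cut such that \emph{both} there is a valid fault distribution making all of $A$ faulty \emph{and} there is a (possibly different) valid fault distribution making all of $B$ faulty.

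Second, I would observe that this negated condition is word-for-word the hypothesis of the previous theorem. Consequently, the conclusion that agreement is impossible follows immediately by invoking that theorem; no new simulation argument or graph-theoretic construction is needed here. In other words, the proof is a single sentence: ``A cut violating the weak cut property is precisely a minimal vertex cut $A \cup B$ in which each of $A$ and $B$ can be entirely Byzantine, so impossibility follows from the preceding theorem.''

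The only point demanding any care is the logical bookkeeping: I must make sure the phrase ``there is a cut that violates'' is correctly translated into ``there exist sets $A$ and $B$ whose union is a minimal vertex cut,'' and that the two bullet conditions of the earlier theorem line up with the negation of the ``either/or'' in the definition. I expect no genuine mathematical obstacle---the substance of the impossibility, namely the indistinguishability-by-simulation argument, was already carried out in the proof of the preceding theorem, and this restatement merely re-expresses that result in the vocabulary of the weak cut property so that later sections can refer to it cleanly.
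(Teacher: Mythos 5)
Your proposal is correct and matches the paper's intent exactly: the paper explicitly presents this theorem as an equivalent restatement of the preceding one and offers no separate proof, so unwinding the definition of the weak cut property and invoking the prior simulation-based impossibility theorem is precisely the right (and only needed) argument.
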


\subsection{Node to Node Communication Algorithm}

We will use the standard flooding algorithm, we will call it \emph{ FLOOD} for convenience. 

In \emph{FLOOD} let us say a node $v$ wants to send a message, $m$, to a node $w$. First $v$ sends the message $v_{uid}|m$, the UID of $v$ appended to the message $m$, to all neighbors of $v$. From this point on, all nodes will follow the following procedure:
\begin{itemize}
\item Take in messages. If you are $w$, do not go any further. 
\item Throw out any messages with non-sense headers
\item If the header already includes ones UID throw it out
\item Append your UID to the message and forward it to all your neighbors. 
\end{itemize}

After $w$ receives the first such message, it will wait for $|G|^2$ rounds and then take in the data. At this point $w$ will consider every possible valid set of failed nodes. For each possible set, $w$ will throw out all messages that include the UID of the failed node. If all the messages left after this have the same message body, then this is the message from $v$, use this message from $v$ to complete the higher level BFT algorithm. 

\subsection{If weak agreement holds then nodes can communicate}

When the conditions of that theorem hold---for instance, when an arbitrary half of the vertices of some cut can be faulty---we have no hope of achieving agreement. Therefore, we assume not. In that case, we show that every pair of non-faulty vertices can communicate reliably.

In this case, we are specifying that we can't have sets $A$ and $B$ with the properties above (that is they can't both have the potential to be fully faulty). To prove this, we look at the communication across a cut between two nodes. We show that there exists a self-consistent set of messages that are truthful, and then we show that given our assumption about the number of faulty nodes in the cut it is the maximal consistent set is truthful. Now we use this successful communication and induct to prove the full theorem.

\begin{thm}
If for any two sets $A$ and $B$ of vertices such that $A \cup B$ is a minimal vertex cut either 
\begin{itemize}
\item There is no valid distribution of faults such that the vertices in A can all simultaneously be faulty.
\item There is no valid distribution of faults such that the vertices in B can all simultaneously be faulty.
\end{itemize}

then every non-faulty vertex $v$ can communicate a message to any non-faulty vertex $w$ such that the message is guaranteed to be the same contents as the message sent by $v$.
\end{thm}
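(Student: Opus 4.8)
The plan is to analyze the FLOOD decoder at $w$ directly and show that the weak cut property forces its surviving verdict to equal the true message. Throughout write $F^{*}$ for the (unknown but valid) set of actually-faulty vertices, and model a received message as a (header, body) pair, the header being the sequence of appended UIDs. The first thing to pin down is the adversary's power: since every honest node appends its own UID and discards a message whose incoming header is not a valid path ending at the neighbour it was received from (I read this into ``non-sense headers''), a faulty vertex may fabricate an arbitrary prefix but cannot erase itself from the view of its first honest successor. With the model fixed I would prove a \emph{truthfulness claim}: any message received by $w$ whose header contains no vertex of $F^{*}$ has body equal to the message $m$ that $v$ actually sent. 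This follows by induction on the header $(v=u_0,u_1,\dots,u_k)$: the honest node $u_k$ forwards only what it received from the tail of its incoming header, which the sender-check forces to be $u_{k-1}$, so the body is preserved all the way down to $u_0=v$, who is honest and emits only $m$. I would also dispatch termination here, since the self-loop check makes every header a simple path of fewer than $|G|$ hops, so all relevant messages reach $w$ well inside the $|G|^2$-round window and only finitely many are ever created.

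The engine of the proof is a \emph{separation lemma}, and this is exactly where the weak cut property does the work: for any two valid fault sets $F_1,F_2$ with $v,w\notin F_1\cup F_2$, the set $F_1\cup F_2$ does not separate $v$ from $w$ in $G$. Suppose it did; then $F_1\cup F_2$ contains a minimal $v$--$w$ separator $C$, which is a minimal vertex cut in the sense used above. Partition it as $A=C\cap F_1$ and $B=C\setminus F_1\subseteq F_2$. Then $F_1$ is a valid distribution making all of $A$ faulty and $F_2$ is a valid distribution making all of $B$ faulty, so both clauses of the weak cut property fail for the cut $C=A\cup B$ --- precisely the configuration the hypothesis forbids. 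This one lemma delivers both halves of what I need.

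I would then finish in two moves. \emph{Reachability}: applying the lemma with $F_1=F_2=F^{*}$ shows $F^{*}$ does not separate $v$ from $w$, so there is a fault-free $v$--$w$ path; the honest-relayed message along it has body $m$ (truthfulness claim) and is received by $w$. Hence under the hypothesized set $F^{*}$ the decoder's surviving messages are non-empty and, again by the claim, unanimously $m$, so $F^{*}$ certifies the answer $m$. \emph{Correctness}: let $F'$ be any valid fault set with $w\notin F'$ (the decoder trusts itself) whose survivors are non-empty and unanimous with body $b'$ (non-emptiness forces $v\notin F'$, since every header begins with $v$). Apply the lemma with $F_1=F^{*}$ and $F_2=F'$: as $F^{*}\cup F'$ does not separate $v$ from $w$, there is a $v$--$w$ path $Q$ avoiding both; the message along $Q$ has body $m$ (truthfulness claim, since $Q$ avoids $F^{*}$) and survives the hypothesis $F'$ (since $Q$ avoids $F'$), forcing $b'=m$. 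Thus every fault hypothesis producing a unanimous verdict produces $m$, and the decoder outputs $m$, establishing reliable $v$-to-$w$ communication.

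The main obstacle is the correctness step: the decoder can be fooled only if some \emph{alternative} valid fault set $F'$ can at once kill every honest-carried copy of $m$ and leave a consistent forged value $b'$ alive, and the key insight is to reframe this as ``$F^{*}\cup F'$ separates $v$ from $w$'' and then read off the forbidden partition $(A,B)=(C\cap F^{*},\,C\cap F'\setminus F^{*})$ of a minimal cut inside the union --- this is where the weak cut property is indispensable and where the argument would otherwise stall. A secondary point needing care is fixing the message/forgery model tightly enough for the truthfulness claim; once the sender-equals-header-tail check is in place the induction is routine. I note that this global argument lets me handle arbitrary non-faulty $v,w$ in one shot, so the induction over cuts suggested earlier is not actually required.
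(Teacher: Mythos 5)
Your proposal is correct and follows essentially the same route as the paper's proof: the core step in both is that if the union of two valid candidate fault sets separated the sender from the receiver, it would contain a minimal cut partitioned into two simultaneously-faultable pieces ($C\cap F_1$ and its complement in $C$), contradicting the hypothesis, so a path avoiding both sets survives and forces every consistent verdict to equal the true message. Your added care about the message model (the truthfulness induction, the sender-equals-header-tail check, and termination) fills in details the paper leaves implicit, but the argument is the same.
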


\begin{proof}
Let $u$ and $v$ be two non-faulty vertices, where $u$ wants to send a message $m$ to $v$. $u$ sends all its neighbors the message ``$u$ says $m$ to $v$’’. Whenever a vertex $w$ gets a message ``$v_0$ says \ldots says $v_t$ says $m$ to $v$’’, it sends all its neighbors other than the $v_i$s the message ``$w$ says $v_0$ says \ldots says $v_t$ says $m$ to $v$’’. When $v$ gets a message for it, it finds a set $F$ of (possibly-faulty) vertices such that deleting all messages containing them leaves a consistent message $m$, and reads that message. 

Such a set $F$ exists because the faulty vertices themselves are one such set, so it suffices to show that there aren’t two distinct sets of possibly-all-faulty $F_1$ and $F_2$ such that deleting all messages containing a vertex in $F_1$ gives one message $m_1$ and deleting all messages containing a vertex in $F_2$ gives a different message $m_2$. Suppose for contradiction that there are. Consider $G \setminus (F_1 \cup F_2)$; we claim it's connected. Suppose not; then there is a cut $C \subseteq F_1 \cup F_2$ in $G$. Then $(C \cap F_1) \cup (C \cap F_2)$ is a cut, so by the conditions of the theorem, either not all the vertices in $C \cap F_1$ can simultaneously be faulty or not all the vertices in $C \cap F_2$ can simultaneously be faulty, so either not all the vertices in $F_1$ can simultaneously be faulty or not all the vertices in $F_2$ can simultaneously be faulty, contradicting the choice of $F_1$ and $F_2$. Hence $G \setminus (F_1 \cup F_2)$ is connected, so it contains a path $P$ consisting of vertices $u = v_0$, $v_1$, \ldots, $v_t=v$. Along that path, $v$ receives the message ``$v_{t-1}$ says\ldots says $u$ says $m$ to $v$’’. But $G \setminus F_1$ contains $P$ and by assumption all the messages in $G \setminus F_1$ agree on $m_1$, so $m_1 = m$, and $G \setminus F_2$ contains $P$ and by assumption all the messages in $G \setminus F_2$ agree on $m_2$, so $m_2 = m$. Hence $m_1 = m_2$, contradiction. So $v$ can determine the unique message sent by $m$, as desired.
\end{proof}

Bellow we state an equivalent theorem statement, which may be more intuitive, but less descriptive. 

\begin{thm}
If the weak cut property holds for every minimal vertex cut in the graph then every non-faulty vertex $v$ can communicate a message to any non-faulty vertex $w$ such that the message is guaranteed to be the same contents as the message sent by $v$.
\end{thm}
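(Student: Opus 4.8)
The plan is to observe that this statement is a verbatim restatement of the preceding theorem, with the itemized hypothesis simply repackaged under the name \emph{weak cut property}. Recall that the weak cut property was defined to hold for a minimal vertex cut precisely when, for every way of writing that cut as $A \cup B$, either no valid fault distribution makes all of $A$ faulty or none makes all of $B$ faulty. Demanding this for every minimal vertex cut in the graph is therefore identical, clause for clause, to the hypothesis of the theorem proved immediately above. Hence I would unfold the definition and invoke that theorem directly, with no additional argument required; the proof is a one-line definitional reduction.

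For completeness I would recall where the substantive content actually lives, namely in the theorem being invoked. That argument runs the FLOOD protocol, in which a message is forwarded as a growing chain ``$v_0$ says $\ldots$ says $m$ to $v$'', and has the receiver $v$ search for a set $F$ of possibly-faulty vertices whose removal leaves a consistent message. The crux is \emph{uniqueness}: one must rule out two candidate sets $F_1$ and $F_2$, each consisting entirely of simultaneously-faultable vertices, that yield conflicting messages $m_1 \neq m_2$.

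The step I expect to be the only nontrivial one is showing that $G \setminus (F_1 \cup F_2)$ is connected. The idea is that disconnection would force $F_1 \cup F_2$ to contain a minimal vertex cut $C$; writing $C = (C \cap F_1) \cup (C \cap F_2)$ then exhibits a partition of a cut into two fully-faultable halves, directly contradicting the weak cut property. Connectivity yields an honest path $P$ from $u$ to $v$ lying inside both $G \setminus F_1$ and $G \setminus F_2$, which forces $m_1 = m = m_2$ and closes the uniqueness argument. Given the theorem already established, however, none of this must be repeated here: the present statement follows immediately once the weak cut property is recognized as the named form of that hypothesis.
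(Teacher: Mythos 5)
Your proposal is correct and matches the paper exactly: the paper offers no separate proof for this statement, explicitly presenting it as an equivalent restatement of the immediately preceding theorem, whose hypothesis is the weak cut property unfolded clause for clause. Your summary of where the real work lives (the connectivity of $G \setminus (F_1 \cup F_2)$ via the cut $(C \cap F_1) \cup (C \cap F_2)$, and the honest path forcing $m_1 = m = m_2$) also faithfully reflects the paper's argument for that preceding theorem.
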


\subsection{If the weak cut property holds agreement is possible}

Now, we show that given the pairwise communication protocol and a small enough proportion of faulty nodes, the non-faulty nodes can reach agreement. 

\begin{thm}
If the weak cut property holds for every minimal vertex cut in the graph and $3f < n$ then agreement is possible.
\end{thm}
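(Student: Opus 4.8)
The plan is to reduce this statement to the classical complete-graph result. By the preceding theorem, the weak cut property guarantees that every non-faulty vertex can deliver a message to any other non-faulty vertex so that the received contents are exactly the sent contents. My first step would be to use this reliable point-to-point primitive to build a \emph{virtual complete graph} on the vertex set: for each ordered pair $(u,w)$ we install a virtual channel in which $u$ runs the relay/\emph{FLOOD} protocol toward $w$. When both endpoints are non-faulty, the previous theorem says this channel behaves exactly like a reliable direct edge; when the sender is faulty it may deliver arbitrary, pairwise-inconsistent contents, which is precisely the behavior the Byzantine model already permits over a direct edge in a complete graph.

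Second, I would observe that this virtual complete graph is a faithful simulation of the setting in which the Lamport--Shostak--Pease Oral Messages algorithm $OM(f)$ is analyzed: at most $f$ Byzantine nodes, reliable communication between every pair of honest nodes, and $n$ total nodes. Since $3f < n$, the classical result guarantees that $OM(f)$, run over the virtual complete graph, achieves both agreement (all non-faulty nodes decide the same value) and validity (if all non-faulty nodes start with the same value, that value is decided). Running $OM(f)$ with each of its node-to-node transmissions replaced by one invocation of the pairwise protocol then solves agreement on the original graph.

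Third, I would address timing. The high-level $OM(f)$ algorithm is synchronous and runs in a bounded number of rounds, whereas each pairwise transmission requires the receiver to wait $|G|^2$ rounds before decoding. I would therefore run the simulation in \emph{phases}, where one phase of $OM(f)$ occupies $|G|^2$ rounds of the underlying protocol; because the flooding diameter is at most $|G|$ and the decoding wait is $|G|^2$, every reliable channel delivers within a single phase, so all messages of a given high-level round are available before the next phase begins and the synchronous structure of $OM(f)$ is preserved up to this polynomial slowdown.

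The main obstacle I expect is the careful composition argument. I must verify that running many instances of the pairwise protocol concurrently---one per ordered pair, per phase---does not interfere; that a faulty relay node cannot corrupt a non-faulty sender's message at a non-faulty receiver, which is exactly what the previous theorem rules out; and that the equivocation available to faulty nodes across distinct virtual channels is no stronger than the equivocation $OM(f)$ is already designed to tolerate. Once these simulation invariants are established, agreement follows immediately from the $3f < n$ guarantee of the classical algorithm.
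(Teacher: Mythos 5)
Your proposal takes essentially the same approach as the paper: simulate a complete graph by replacing each pairwise transmission of the standard $3f<n$ Byzantine agreement algorithm with the reliable point-to-point protocol established in the preceding theorem. The paper's own proof is a terse version of exactly this reduction; your additional attention to phase synchronization and concurrent composition only makes the argument more careful than the original.
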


\begin{proof}

We can simulate a fully connected graph of size $n$, if every node can communicate with every other node pairwise. 
On a fully connected graph of size $n$ agreement is possible when the number of faults is less than one third of the total number of nodes. This holds due to the condition that $3f < n$.  Thus, running the standard byzantine fault tolerance algorithm for a fully connected graph, but where each pairwise communication is replaced with our pairwise communication protocol, will reach agreement. 

\end{proof}

\subsection{Application to Majority Non-Faulty Nodes Per Cut}
If a cut is of size $c$ and there are $f$ faults where $c>2f$ then $|A| \geq \lceil c/2 \rceil$ or $|B| \geq \lceil c/2 \rceil$. Furthermore: $\lceil c/2 \rceil > f$. Thus, $|A|>f$ or $|B|>f$ thus at least one of $A$ and $B$ are large enough that they are guaranteed to have at least one non-faulty node.  This allows any two nodes to communicate messages to each-other.  Thus, if $n > 3f$ then agreement can be achieved. 

\subsection{Application to Unknown Network Graph}

The above theorems assume that vertices know the underlying network graph only to check that the nodes guessed faulty, $F_1$ or $F_2$, are a minority of each cut. One might worry that faulty nodes could claim an incorrect graph structure in which they are a majority of some cut. However, with the slight modification that each vertex also sends its set of neighbors as part of every message, the algorithm works anyway, and knowledge of the graph is unnecessary: again, if $F_1$ and $F_2$ are two distinct sets of possibly-all-faulty vertices, let $T$ be the connected component of $G \setminus (F_1 \cup F_2)$ containing $v$. All the vertices in $T$ tell $v$ their sets of neighbors, and $v$ trusts those messages since they're from vertices in neither $F_1$ nor $F_2$, so $B$ knows the set $\delta(T)$ (neighbors of vertices in $T$ that aren't themselves in $T$, which form a minimal vertex cut). Then $F_1 \cap \delta(T)$ and $F_2 \cap \delta(T)$ are two sets of vertices that can all simultaneously be faulty whose union is reliably known by $v$ to be a minimal vertex cut, contradiction.

Thus, if we have the requirement of theorem 2 and $n > 3f$ then agreement can be achieved. If $2f < c$ then the condition in theorem 2 is met. Thus, if $n >3f$ and $c >2f$ then agreement can be achieved in an unknown network graph. 

\subsection{Application to Nodes Non-Faulty by Assumption}
If a node is non-faulty by assumption and theorem 2 is met then agreement is possible. Any cut with a non-faulty node can not meet the condition in theorem 1. Thus, if all cuts not containing a non-faulty node meet the condition in theorem 2 then all cuts meet the condition of theorem 2. Thus, pair-wise communication between nodes is possible. Thus, all nodes can communicate with the non-faulty node. The non-faulty by assumption nodes can elect a leader. Then, this leader can broadcast to all nodes it's value. All non-faulty nodes can agree to this value and achieve agreement. A Byzantine node broadcasting a value will be ignored, as it isn't a non-faulty by assumption node. 

If a subgraph $G$ is guaranteed to have more than $1/2$ of nodes be non-faulty and for all non-faulty nodes to reach agreement and theorem 2 is met, we can achieve agreement. Every node can communicate pair-wise because theorem 2 is effect. First all nodes in the subgraph $G$ communicate and get agreement between themselves. Then all nodes in $G$ communicate the agreement to the rest of the graph. Each node not in $G$ takes a vote over the values broadcast from nodes in $G$ and agrees to that value. All nodes in $G$ agree to the agreement value. All non-faulty nodes will agree to the same agreement value.

\section{Conclusion}
We have given a strictly more general characterization of the connectivity requirement for solving agreement with Byzantine Faults. In particular, our algorithm solves the agreement problem when the number of faults is less than a third of the number of nodes and no minimal vertex-cut set has a majority faulty nodes. It is easily seen that the prior condition of the number of faults being strictly less than half the graph connectivity satisfies our result. This generalization, allows for solving problems where we have more information about the distribution of faulty nodes, but, have more faults than halve of the smallest cut. 

We further show that our algorithm solves agreement even when lacking a description of the network on which it is executing; something that couldn't be done with the original agreement algorithm. We also show that, if we have a trusted node and our connectivity condition, agreement can be solved. These are some examples of cases where our more flexible algorithm and condition can allow for practical assumptions to make strong statements about systems. In addition, these extensions may help us tackle other open questions like agreement in dynamic graphs and robust network design.

\bibliographystyle{plain}
\bibliography{PaperBib}

\begin{thebibliography}{1}

\bibitem{RandomDist}
Douglas~M. Blough and Andrzej Pelc.
\newblock Optimal communication in networks with randomly distributed byzantine
  faults.
\newblock {\em Networks}, 23(8):691--701, 1993.

\bibitem{CL99}
Miguel Castro and Barbara Liskov.
\newblock Practical byzantine fault tolerance.
\newblock {\em OSDI}, 99:173--186, 1999.

\bibitem{Dolev82}
D.~Dolev.
\newblock The byzantine generals strike again.
\newblock {\em Journal of Algorithms}, 3(1):14--30, 1982.

\bibitem{DS83}
D.~Dolev and H.~R. Strong.
\newblock Authenticated algorithms for byzantine agreement.
\newblock {\em SIAM Journal of Computing}, 12(4):656--666, 1983.

\bibitem{Reischuk198523}
R{\"u}diger Reischuk.
\newblock A new solution for the byzantine generals problem.
\newblock {\em Information and Control}, 64(1–3):23 -- 42, 1985.
\newblock International Conference on Foundations of Computation Theory.

\bibitem{LSP80}
Leslie Lamport{,}~Robert Shostak{,} and Marshall Pease.
\newblock Reaching agreement in the presence of faults.
\newblock {\em Journal of the ACM}, 27(2):228--234, 1980.

\bibitem{LSP82}
Leslie Lamport{,}~Robert Shostak{,} and Marshall Pease.
\newblock The byzantine generals problem.
\newblock {\em ACM Transactions on Programming Languages and Systems},
  4(3):382--401, 1982.

\end{thebibliography}
\end{document}